\def\factor{0.5} 
\def\suppmat{Appendix}
\def\SM{App.\ A}
\def\Gp{G^{\prime}}
\def\Djs{D_\mathrm{JS}}
\def\ER{Erd{\H o}s-R{\'e}nyi}
\def\BA{Barab{\'a}si-Albert}
\newcommand\avg[1]{\ensuremath{\left< #1 \right>}}
\def\celegans{\emph{C.\ elegans}}
\newtheorem{theorem}{Theorem}[section]
\newtheorem{definition}{Definition}[section]
\newcommand\KL[2]{\ensuremath{\mathit{KL}\left( #1 \,||\, #2 \right)}} 
\setlist[enumerate]{leftmargin=.5in}
\setlist[itemize]{leftmargin=.5in}
\begin{document}

\title{An information-theoretic, all-scales approach to comparing networks}

\date{July 25, 2019}

\author[1,2,*]{James P.~Bagrow}
\author[3]{Erik M.~Bollt}
\affil[1]{Department of Mathematics \& Statistics, University of Vermont, Burlington, VT, United States}
\affil[2]{Vermont Complex Systems Center, University of Vermont, Burlington, VT, United States}
\affil[3]{Department of Mathematics, Clarkson University, Potsdam, NY, United States}
\affil[*]{\corrauthinfo{james.bagrow@uvm.edu}{bagrow.com}}

\maketitle

\begin{abstract}
As network research becomes more sophisticated, it is more common than ever for researchers to find themselves not studying a single network but needing to analyze sets of networks.
An important task when working with sets of networks is network comparison, developing a similarity or distance measure between networks so that meaningful comparisons can be drawn. The best means to accomplish this task remains an open area of research.
Here we introduce a new measure to compare networks, the Network Portrait Divergence, that is mathematically principled, incorporates the topological characteristics of networks at all structural scales, and is general-purpose and applicable to all types of networks.
An important feature of our measure that enables many of its useful properties is that it is based on a graph invariant, the network portrait.
We test our measure on both synthetic graphs and real world networks taken from protein interaction data, neuroscience, and computational social science applications. 
The Network Portrait Divergence reveals important characteristics of multilayer and temporal networks extracted from data.
\end{abstract}

\begin{keywords}
network comparison; graph similarity; multilayer networks; temporal networks; weighted networks; network portraits; GitHub; Arabidopsis; C.\ Elegans connectome.
\end{keywords}

\bigskip
\noindent
\textbf{\textit{Code---}} \url{https://github.com/bagrow/network-portrait-divergence}

\section{Introduction}\label{sec:introduction}

Recent years have seen an explosion in the breadth and depth of network data across a variety of scientific domains~\cite{bader2004gaining,lazer2009life,Landhuis:2017aa}.
This scope of data challenges researchers, and new tools and techniques are necessary for evaluating and understanding networks.
It is now increasingly common to deal with multiple networks at once, from brain networks taken from multiple subjects or across longitudinal studies~\cite{whelan2012adolescent}, to multilayer networks extracted from high-throughput experiments~\cite{de2015structural}, to rapidly evolving social network data~\cite{palla2007quantifying,szell2010multirelational,bagrow2011collective}.
A common task researchers working in these areas will face is comparing networks, quantifying the similarities and differences between networks in a meaningful manner.
Applications for network comparison include comparing brain networks for different subjects, or the same subject before and after a treatment, studying the evolution of temporal networks~\cite{holme2012temporal}, classifying proteins and other sequences~\cite{shervashidze2009efficient,yanardag2015deep,niepert2016learning}, classifying online social networks~\cite{yanardag2015deep}, or evaluating the accuracy of statistical or generative network models~\cite{hunter2008goodness}. 
Combined with a clustering algorithm, a network comparison measure can be used to aggregate networks in a meaningful way, for coarse-graining data and revealing redundancy in multilayer networks~\cite{de2015structural}.
Treating a network comparison measure as an objective function, optimization methods can be used to fit network models to data.

Approaches to network comparison can be roughly divided into two groups, those that consider or require two graphs defined on the same set of nodes, and those that do not. 
The former eliminates the need to discover a mapping between node sets, making comparison somewhat easier.
Yet, two networks with identical topologies may have no nodes or edges in common simply because they are defined on different sets of nodes.
While there are scenarios where assuming the same node sets is appropriate---for example, 
when comparing the different layers of a multilayer network one wants to capture explicitly the correspondences of nodes between layers~\cite{de2015structural}---here we wish to relax this assumption and allow for comparison without node correspondence, where no nodes are necessarily shared between the networks.

A common approach for comparison without assuming node correspondence is to build a comparison measure using a \emph{graph invariant}.
Graph invariants are properties of a graph that hold for all isomorphs of the graph.
Using an invariant mitigates any concerns with the encoding or structural representation of the graphs, and the comparison measure is instead focused entirely on the topology of the network.
Graph invariants may be probability distributions.
Suppose $P$ and $Q$ represent two graph-invariant distributions corresponding to graphs $G_1$ and $G_2$, respectively.
Then, a common approach to comparing $G_1$ and $G_2$ is by comparing $P$ and $Q$.
Information theory provides tools for comparing distributions, such as the Jensen-Shannon divergence:
\begin{equation}
	\Djs(P, Q) = \frac{1}{2}\KL{P}{M} + \frac{1}{2}\KL{Q}{M}
\end{equation}
where $\KL{P}{Q}$ is the Kullback-Leibler (KL) divergence (or relative entropy) between $P$ and $Q$ and $M = (P + Q)/2$ is the mixture distribution of $P$ and $Q$. 
The Jensen-Shannon divergence has a number of nice properties, including that it is symmetric and normalized, 
making it a popular choice for applications such as ours~\cite{PhysRevX.6.041062,chen2018complex}.
In this work, we introduce a novel graph-invariant distribution that is general and free of assumptions and we can then use common information-theoretic divergences such as Jensen-Shannon to compare networks via these graph invariants.

The rest of this paper is organized as follows.
In Sec.~\ref{sec:networkportraits} we describe network portraits~\cite{bagrow2008portraits}, a graph invariant matrix representation of a network that is useful for visualization purposes but also capable of comparing pairs of networks.
Section~\ref{sec:infothrnetcomp} introduces Network Portrait Divergences, a principled information-theoretic measure for comparing networks, building graph-invariant distributions using the information contained within portraits. 
Network Portrait Divergence has a number of desirable properties for a network comparison measure.
In Sec.~\ref{sec:results} we apply this measure to both synthetic networks (random graph ensembles) and real-world datasets (multilayer biological and temporal social networks), demonstrating its effectiveness on practical problems of interest.
Lastly, we conclude in Sec.~\ref{sec:discussion} with a discussion of our results and future work.

\section{Network portraits}
\label{sec:networkportraits}

Network portraits were introduced in~\cite{bagrow2008portraits} as a way to visualize and encode many structural properties of a given network. Specifically, the network portrait $B$ is the array with $(\ell,k)$ elements
\begin{equation}
B_{\ell,k} \equiv \mbox{the number of nodes who have $k$ nodes at distance $\ell$}
\label{eqn:bmatrixdefinition}
\end{equation}
for $0 \leq \ell \leq d$ and $0\leq k \leq N-1$, where distance is taken as the shortest path length and $d$ is the graph's diameter
\footnote{Note that a distance $\ell=0$ is admissible, with two nodes $i$ and $j$ at distance 0 when $i=j$. This means that the matrix $B$ so defined has a zeroth row. It also has a zeroth column, as there may be nodes that have zero nodes at some distance $\ell$. This occurs for nodes with eccentricity less than the graph diameter.}.
The elements of this array are computed using, e.g.,  Breadth-First Search.
Crucially, no matter how a graph's nodes are ordered or labeled the portrait is identical.
We draw several example networks and their corresponding portraits in Fig.~\ref{fig:exampleportraits}.

\begin{figure*}[t!]
\centering
\includegraphics[width=\textwidth]{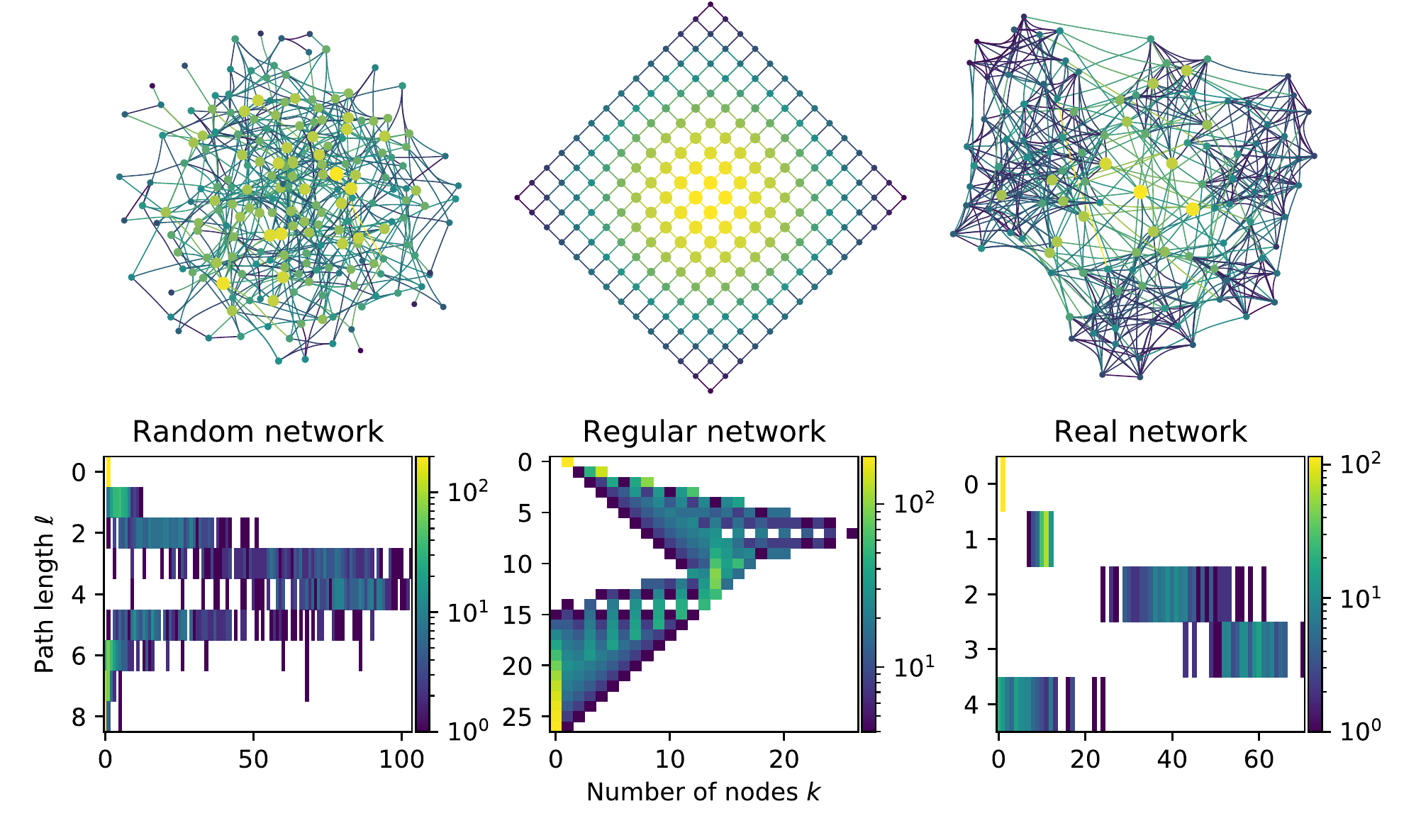} 
\caption{Example networks and their portraits. The random network is an \ER{} graph while the real network is the NCAA Division-I football network~\cite{park2005network}.
Colors denote the entries of the portrait matrix $B$ (Eq.~\eqref{eqn:bmatrixdefinition}; white indicates $B_{\ell,k}=0$).
\label{fig:exampleportraits}
}
\end{figure*}

This matrix encodes many structural features of the graph.
The zeroth row stores the number of nodes $N$ in the graph: $$B_{0,k} = N \delta_{k,1}.$$
The first row captures the degree distribution $P(k)$: $$B_{1,k} = N P(k),$$ as neighbors are at distance $\ell = 1$.
The second row captures the distribution of next-nearest neighbors, and so forth for higher rows.
The number of edges $M$ is $\sum_{k=0}^{N} k B_{1,k} = 2 M$.
The graph diameter $d$ is 
$$d = \max\{\ell \mid B_{\ell,k} > 0 \mbox{~for~} k > 0\}.$$
The shortest path distribution is also captured: the number of shortest paths of length $\ell$
is $\frac{1}{2}\sum_{k=0}^{N} k B_{\ell,k}$.
And the portraits of random graphs present very differently from highly ordered structures such as lattices (Fig.~\ref{fig:exampleportraits}), demonstrating how dimensionality and regularity of the network is captured in the portrait~\cite{bagrow2008portraits}.

One of the most important properties of portraits is that they are a graph invariant:

\begin{definition}
A \textbf{graph invariant} is a property of a graph that is invariant under graph isomorphism, i.e., it is a function $f$ such that $f(G) = f(H)$ whenever $G$ and $H$ are isomorphic graphs.
\end{definition}

\begin{theorem}
The network portrait (Eq.~\eqref{eqn:bmatrixdefinition}) is a graph invariant.
\end{theorem}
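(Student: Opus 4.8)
The plan is to reduce the claim to the standard fact that a graph isomorphism preserves shortest-path distances, and then observe that every entry of the portrait is defined purely in terms of those distances together with a vertex-counting operation that is blind to labels.

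First I would fix isomorphic graphs $G$ and $H$ together with an isomorphism $\phi : V(G) \to V(H)$, i.e., a bijection such that $uv \in E(G)$ if and only if $\phi(u)\phi(v) \in E(H)$. The first step is to establish that $d_G(u,v) = d_H(\phi(u),\phi(v))$ for all $u,v \in V(G)$, where $d$ denotes shortest-path length. This holds because $\phi$ sends any walk $u = w_0, w_1, \ldots, w_m = v$ in $G$ to a walk $\phi(w_0), \ldots, \phi(w_m)$ of the same length in $H$, and $\phi^{-1}$ (also an isomorphism) does the reverse; hence the two endpoint pairs admit the same set of connecting walk lengths, and therefore the same minimum. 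In particular the graph diameters agree, so $B(G)$ and $B(H)$ are arrays of the same shape.

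Next I would introduce, for a vertex $v$, its distance profile $n_\ell(v) \equiv |\{\, u : d(u,v) = \ell \,\}|$. Since $\phi$ is a distance-preserving bijection, it restricts to a bijection between $\{\, u : d_G(u,v) = \ell \,\}$ and $\{\, w : d_H(w,\phi(v)) = \ell \,\}$ for every $\ell$, so $n_\ell(v) = n_\ell(\phi(v))$. Finally, rewriting the definition in Eq.~\eqref{eqn:bmatrixdefinition} as $B_{\ell,k}(G) = |\{\, v \in V(G) : n_\ell(v) = k \,\}|$, the map $\phi$ carries the set on the right bijectively onto $\{\, w \in V(H) : n_\ell(w) = k \,\}$, whence $B_{\ell,k}(G) = B_{\ell,k}(H)$ for all admissible $\ell$ and $k$. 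Since this includes the $\ell = 0$ row and the $k = 0$ column, it also covers the conventions noted in the footnote.

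There is no serious obstacle here: the argument is essentially bookkeeping once distance preservation is in hand. The only points requiring a little care are making the distance-preservation step precise (either the symmetric walk-image argument above or a short induction on path length) and, in the final step, genuinely using that $\phi$ is a \emph{bijection} — an injection or a surjection alone would not suffice to equate the two cardinalities.
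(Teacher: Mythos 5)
Your proof is correct and follows essentially the same route as the paper's: establish that an isomorphism preserves shortest-path distances, then observe that every entry of $B$ is obtained by counting vertices according to those distances. Your version is somewhat more explicit than the paper's (which compresses the final counting step into the phrase ``aggregating the values of $\ell_G(i,j)$''), but the underlying argument is identical.
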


\begin{proof}
Let $f: V_G \to V_H$ be a vertex bijection between two graphs $G = (V_G, E_G)$ and $H = (V_H, E_H)$ such that the number of edges between every pair of vertices $(i,j)$ in $G$ equals the number of edges between their images $(f(i),f(j))$ in $H$. 
Then $G$ and $H$ are isomorphic. 
Let $\ell_G(i,j)$ be the length of the shortest path between nodes $i$ and $j$ in $G$. 
For two isomorphic graphs $G$ and $H$, $\ell_G(i,j) = \ell_H(f(i), f(j))$ for all $i$ and $j$ in $G$, since the shortest path tuples $(i, \ldots, j)$ in $G$ and $(f(i), \ldots, f(j))$ in $H$ are the same length. 
All elements in the matrix $B(G)$ are computed by aggregating the values of $\ell_G(i,j)$. 
Therefore, $B(G) = B(H)$.
\end{proof}

Note that the converse is not necessarily true: that $f(G) = f(H)$ does not imply that $G$ and $H$ are isomorphic. 
As a counter-example, the non-isomorphic distance-regular dodecahedral and Desargues graphs have equal portraits~\cite{bagrow2008portraits}.

\medskip{}\smallskip{}
\noindent\textbf{Portraits of weighted networks~~}
The original work defining network portraits \cite{bagrow2008portraits} did not consider weighted networks, where a scalar quantity $w_{ij}$ is associated with each $(i,j) \in E$.
An important consideration is that path lengths for weighted networks are generally computed by summing edge weights along a path, leading to path lengths $\ell \in \mathbb{R}$ (typically) instead of path lengths $\ell \in \mathbb{Z}$.
To address this, in the \suppmat{} (\SM{}) we generalize the portrait to weighted networks, specifically accounting for how real-valued path lengths must change the definition of the matrix $B$.

\subsection{Comparing networks by comparing portraits}

Given that a graph $G$ admits a unique $B$-matrix makes these portraits a valuable tool for network comparison. Instead of directly comparing graphs $G$ and $G^{\prime}$, we may compute their portraits $B$ and $B^{\prime}$, respectively, and then compare these matrices.
We review the comparison method in our previous work~\cite{bagrow2008portraits}.
First, compute for each portrait $B$ the matrix $C$ consisting of row-wise cumulative distributions of $B$:
\begin{equation}
C_{\ell,k} = \left. \sum_{j=0}^{k} B_{\ell,j} \middle/ \sum_{j=0}^{N} B_{\ell,j} \right..
\end{equation}
The row-wise Kolmogorov-Smirnov test statistic $K_\ell$ between corresponding rows $\ell$ in $C$ and $C^{\prime}$: $$K_\ell = \max \left| C_{\ell,k} - C_{\ell,k}^{\prime} \right|$$ allows a metric-like graph comparison.
This statistic defines a two-sample hypothesis test for whether or not the corresponding rows of the portraits are drawn from the same underlying, unspecified distribution.
If the two graphs have different diameters, the portrait for the smaller diameter graph can be expanded to the same size as the larger diameter graph by defining empty shells $\ell > d$ as $B_{\ell,k} = N \delta_{0,k}$.
Lastly, aggregate the test statistics for all pairs of rows using a weighted average to define the similarity $\Delta(G,G^{\prime})$ between $G$ and $G^{\prime}$:
\begin{equation}
\Delta(G,G^{\prime}) \equiv \Delta(B,B^{\prime}) = \frac{\sum_\ell \alpha_\ell K_\ell}{\sum_\ell \alpha_\ell},
\label{eqn:origComparison}
\end{equation}
where 
\begin{equation}
\alpha_\ell = \sum_{k > 0} B_{\ell,k} + \sum_{k > 0} B_{\ell,k}^{\prime}
\label{eqn:origComparisonWeights}
\end{equation}
is a weight chosen to increase the impact of the lower, more heavily occupied shells.

While we did develop a metric-like quantity for comparing graphs based on the KS-statistics (Eqs.~\eqref{eqn:origComparison} and \eqref{eqn:origComparisonWeights}), we did not emphasize the idea.
Instead, the main focus of the original portraits paper was on the use of the portrait for visualization. 
In particular, Eq.~\eqref{eqn:origComparison} is somewhat ad hoc. Here we now propose a stronger means of comparison using network portraits that is interpretable and grounded in information theory.

\section{An information-theoretic approach to network comparison}
\label{sec:infothrnetcomp}

Here we introduce a new way of comparing networks based on portraits. 
This measure is grounded in information theory, unlike the previous, ad hoc comparison measure, and has a number of other desirable attributes we discuss below.

The rows of $B$ may be interpreted as probability distributions:
\begin{equation}
P(k\mid\ell) = \frac{1}{N} B_{\ell,k}
\label{eqn:PrkGivenL}
\end{equation}
is the (empirical) probability that a randomly chosen node will have $k$ nodes at distance $\ell$.
This invites an immediate comparison \emph{per row} for two portraits:
\begin{equation}
\KL{P(k\mid\ell)}{Q(k\mid\ell)}= \sum_k P(k\mid\ell) \log \frac{P(k\mid\ell)}{Q(k\mid\ell)},
\label{eqn:KLperRow}
\end{equation}
where $\KL{p}{q}$ is the Kullback-Liebler (KL) divergence between two distributions $p$ and $q$, and $Q$ is defined as per Eq.~\eqref{eqn:PrkGivenL} for the second portrait (i.e., $Q(k\mid\ell) = \frac{1}{N^{\prime}} B_{\ell,k}^{\prime}$).
The KL-divergence admits an information-theoretic interpretation that describes how many extra bits are needed to encode values drawn from the distribution $P$ if we used the distribution $Q$ to develop the encoding instead of $P$.

However, while this seems like an appropriate starting point for defining a network comparison, Eq.~\eqref{eqn:KLperRow} has some drawbacks:
\begin{enumerate}\itemsep=0pt
\item $\KL{P(k)}{Q(k)}$ is undefined if there exists a value of $k$ such that $P(k)>0$ and $Q(k)=0$. Given that rows of the portraits are computed from individual networks, which may have small numbers of nodes, this is likely to happen often in practical use.

\item The KL-divergence is not symmetric and does not define a distance. 

\item Defining a divergence for each pair of rows of the two matrices gives $\max(d,d^{\prime})+1$ separate divergences, where $d$ and $d^{\prime}$ are the diameters of $G$ and $G^{\prime}$, respectively.
To define a scalar comparison value (a similarity or distance measure) requires an appropriate aggregation of these values, just like the original approach proposed in \cite{bagrow2008portraits}; we return to this point below.

\end{enumerate}
The first two drawbacks can be addressed by moving away from the KL-divergence and instead using, e.g., the Jensen-Shannon divergence or Hellinger distance. 
However, the last concern, aggregating over $\max(d,d^\prime)+1$ difference quantities, remains for those measures as well.

Given these concerns, we propose the following, utilizing the shortest path distribution encoded by the network portraits. 
Consider choosing two nodes uniformly at random with replacement. 
The probability that they are connected is
\begin{equation}
	\frac{\sum_c n_c^2}{N^2},
\end{equation}
where $n_c$ is the number of nodes within connected component $c$, the sum $\sum_c n_c^2$ runs over the number of connected components, and the $n_c$ satisfy $\sum_c n_c = N$.
Likewise, the probability the two nodes are 
at a distance $\ell$ from one another is
\begin{equation}
\frac{\text{\# paths of length }\ell}{\text{\# paths}} = \frac{1}{\left(\sum_c n_c^{2}\right)} \sum_{k=0}^{N}k B_{\ell,k}. 
\end{equation}
Lastly, the probability that one of the two nodes has $k-1$ other nodes at distance $\ell$ is given by
\begin{equation}
\frac{
k B_{\ell,k}
}{
	\sum_{k^{\prime}} k^{\prime} B_{\ell, k^{\prime}} }
.	
\end{equation}

We propose to combine these probabilities into a single distribution that encompasses the distances between nodes weighted by the ``masses'' or prevalences of other nodes at those distance, giving
us the
probability for choosing a pair of nodes at distance $\ell$ and for one of the two randomly chosen nodes to have $k$ nodes at that distance $\ell$:
\begin{equation}
P(k, \ell) = \frac{	\sum_c n_c^2	} {	N^2	}  \frac{	\sum_{k^{\prime}} k^{\prime} B_{\ell, k^{\prime}}	}{	\sum_{c} n_c^2	} \frac{	k B_{\ell,k}	}{	\sum_{k^{\prime}} k^{\prime} B_{\ell, k^{\prime}}	}  	= \frac{k B_{\ell,k}}{N^2}
\end{equation}
and likewise for $Q(k, \ell)$ using $B^{\prime}$ instead of $B$.
However, this distribution is not normalized
\begin{equation}
	\sum_k \sum_\ell k B_{\ell,k} = \sum_c n_c^2 \neq N^2\end{equation}
unless the graph $G$ is connected.
It will be advantageous for this distribution to be normalized in all instances, 
therefore, we condition this distribution on the two randomly chosen nodes being connected:
\begin{equation}
P(k,\ell) = 
\frac{\sum_{k^{\prime}} k^{\prime} B_{\ell, k^{\prime}}}{\sum_c n_c^2}
\frac{k B_{\ell,k}}{\sum_{k^{\prime}} k^{\prime} B_{\ell, k^{\prime}} }
= 	\frac{k B_{\ell,k} }{\sum_c n_c^2}.
\label{eqn:Pkell}
\end{equation}
This now defines a single (joint) distribution $P$ ($Q$) for all rows of $B$ ($B^\prime$) which can then be used to define a single KL-divergence between two portraits:
\begin{equation}
\KL{P(k,\ell)}{Q(k,\ell)} = \sum_{\ell=0}^{\max(d,d^{\prime})} \sum_{k=0}^{N} P(k,\ell) \log \frac{P(k,\ell)}{Q(k,\ell)}
\label{eqn:KLpq}
\end{equation}
where the log is base 2.

\begin{definition}
The \textbf{Network Portrait Divergence} $\Djs(G,\Gp)$ between two graphs $G$ and $\Gp$ is the Jensen-Shannon divergence as follows,
\begin{equation}
\Djs(G, G^{\prime}) \equiv \frac{1}{2}\KL{P}{M} + \frac{1}{2}\KL{Q}{M}
\end{equation}
where $M = \frac{1}{2}\left(P+Q\right)$ is the mixture distribution of $P$ and $Q$. 
Here $P$ and $Q$ are defined by Eq.~\eqref{eqn:Pkell}, and the $\KL{\cdot}{\cdot}$ is given by Eq.~\eqref{eqn:KLpq}.
\label{def:portraitdiv}
\end{definition}

The Network Portrait Divergence $0 \leq \Djs \leq 1$ provides a single value to quantify the dissimilarity of the two networks by means of their distance distributions, with smaller $\Djs$ for more similar networks and larger $\Djs$ for less similar networks. Unlike the KL divergence, $\Djs$ is symmetric, $\Djs(G,\Gp) = \Djs(\Gp,G)$ and $\sqrt{\Djs}$ is a metric~\cite{1207388}.

The Network Portrait Divergence has a number of desirable properties\footnote{Code implementing Network Portrait Divergence is available at \url{https://github.com/bagrow/network-portrait-divergence}.}.
It is grounded in information theory, which provides principled interpretation of the divergence measure.
It compares networks based entirely on the structure of their respective topologies:
the measure is independent of how the nodes in the network are indexed and, further, does not assume the networks are defined on the same set of nodes. 
Network Portrait Divergence is relatively computationally efficient; unlike graph edit distance measures, for example, because Network Portrait Divergence is based on a graph invariant and expensive optimizations such as ``node matching'' are not needed.
Both undirected and directed networks are treated naturally, and disconnected networks can be handled without any special problems.
Using the generalization of network portraits to weighted networks (see \SM{}), the Network Portrait Divergence can also be used to compare weighted networks.
Lastly, all scales of structure within the two networks contribute simultaneously to the Network Portrait Divergence via the joint neighbor-shortest path length distribution (Eq.~\eqref{eqn:Pkell}), from local structure to motifs to the large scale connectivity patterns of the two networks.

\section{Results}
\label{sec:results}

Now we explore the use of the Network Portrait Divergence (Definition~\ref{def:portraitdiv}) to compare networks across a variety of applications. 
We study both synthetic example graphs, to benchmark the behavior of the comparison measure under idealized conditions. Then real world network examples are presented to better capture the types of comparison tasks researchers may encounter.

\subsection{Synthetic networks}

To understand the performance of the Network Portrait Divergence, we begin here by examining how it relates different realizations of the following synthetic graphs:
\begin{enumerate}
\item \ER{} (ER) graphs $G(N,p)$~\cite{erdos1959random}, the random graph on $N$ nodes where each possible edge exists independently with constant probability $p$; 
\item \BA{} (BA) graphs $G(N,m)$~\cite{barabasi1999emergence}, where $N$ nodes are added sequentially to a seed graph and each new node attaches to $m$ existing nodes according to preferential attachment.
\end{enumerate}

Figure~\ref{fig:distrPortraitDivERBA} shows the distributions of $\Djs{}$ for different realizations of ER and BA graphs with the same parameters, as well as the cross-comparison distribution using $\Djs{}$ to compare one realization of an ER graph with one realization of a BA graph where both graphs have the same average degree $\avg{k}$ but other parameters may vary.
Overall we note that realizations drawn from the same ensemble have relatively small Network Portrait Divergence, with ER falling roughly in the range $0 < \Djs{} < 0.5$ and BA between $0.1 < \Djs{} < 0.4$ (both smaller than the $\max(\Djs) = 1$).
In contrast, $\Djs{}$ is far higher when comparing one ER to one BA graph, with $\Djs > 0.6$ in our simulations.
The Network Portrait Divergence captures the smaller differences between different realizations of a single networks ensemble with correspondingly smaller values of $\Djs{}$ while the larger differences between networks from different ensembles are captured with larger values of $\Djs{}$.

\begin{figure}[t!]
\centering
{\includegraphics[width=\factor\columnwidth]{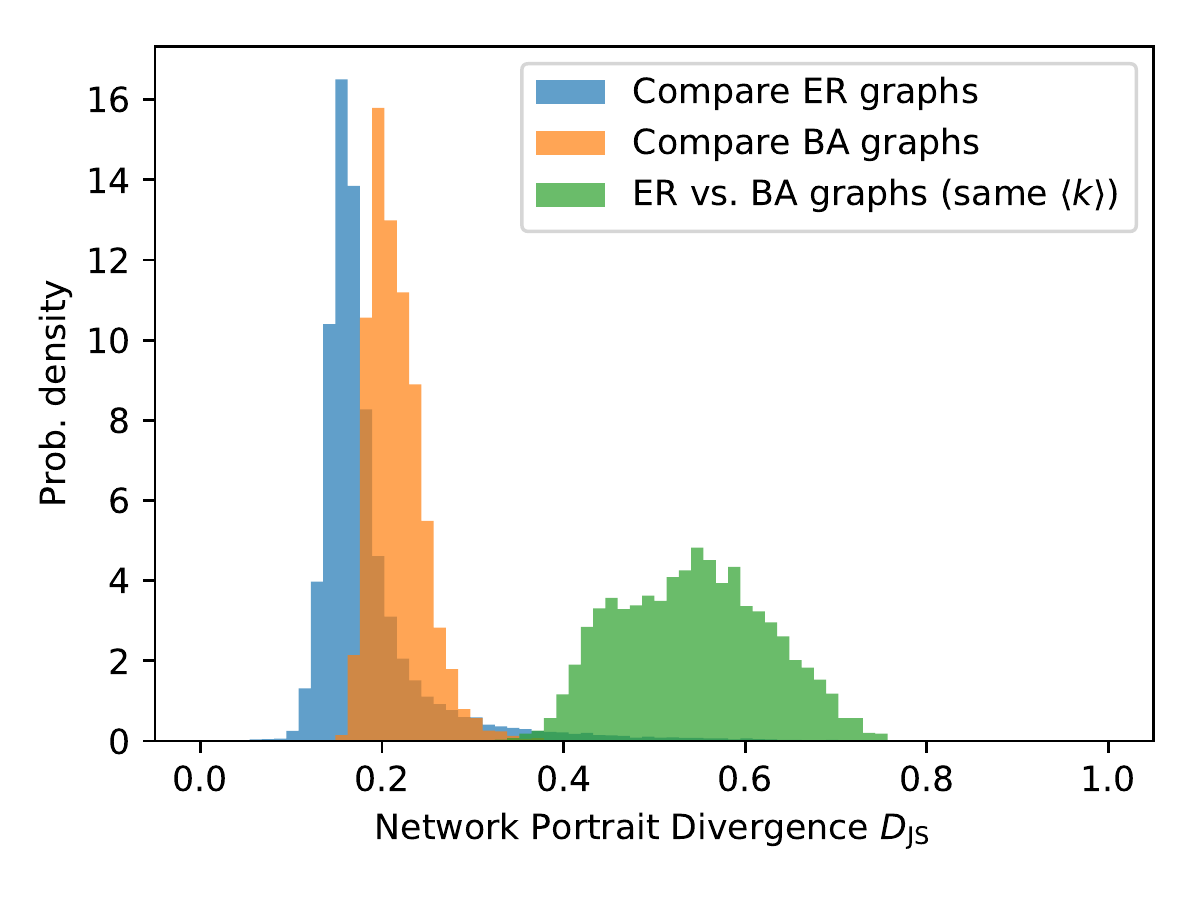}}
\caption{Comparing network models. 
As a simple starting point, here we compare pairs of \ER{} (ER) realizations with the same parameters, pairs of \BA{} (BA) realizations with the same parameters, and pairs of one ER graph versus one BA graph (but with the same number of nodes and average degree).
For these examples, Network Portrait Divergence has high discriminative power.
\label{fig:distrPortraitDivERBA}
}
\end{figure}

\subsection{Measuring network perturbations with Network Portrait Divergence}

Next, we ask how well the Network Portrait Divergence measures the effects of network perturbations. 
We performed two kinds of rewiring perturbations to the links of a given graph $G$: (i) \emph{random rewiring}, where each perturbation consists of deleting an existing link chosen uniformly at random and inserting a link between two nodes chosen uniformly at random; and (ii) \emph{degree-preserving rewirings}~\cite{li2017network}, where each perturbation consists of removing a randomly chosen pair of links $(i,j)$ and $(u,v)$ and inserting links $(i,u)$ and $(j,v)$. 
The links $(i,j)$ and $(u,v)$ are chosen such that $(i,u)\not\in E$ and $(j,v)\not\in E$, ensuring that the degrees of the nodes are constant under the rewiring.

We expect that random rewirings will lead to a stronger change in the network than the degree-preserving rewiring.
To test this, we generate an ER or BA graph $G$, apply a fixed number $n$ of rewirings to a copy of $G$, and use the Network Portrait Divergence to compare the networks before and after rewirings. 
Figure \ref{fig:rewiringsPortraitDivergenceERBA} shows how $\Djs$ changes on average as a function of the number of rewirings, for both types of rewirings and both ER and BA graphs.
The Network Portrait Divergence increases with $n$, as expected.
Interestingly, below  $n \approx 100$ rewirings, the different types of rewirings are indistinguishable, but for $n > 100$ we see that random rewirings lead to a larger divergence from the original graph than degree-preserving rewirings.
This is especially evident for BA graphs, where the scale-free degree distribution is more heavily impacted by the random rewiring than for ER graphs.
The overall $\Djs$ is also higher in value for BA graphs than ER graphs.
This is plausible because the ER graph is already maximally random, whereas many correlated structures exist in a random realization of the BA graph model that can be destroyed by perturbations ~\cite{albert2002statistical}.

\begin{figure*}[t!]
\centering
\includegraphics[width=\textwidth]{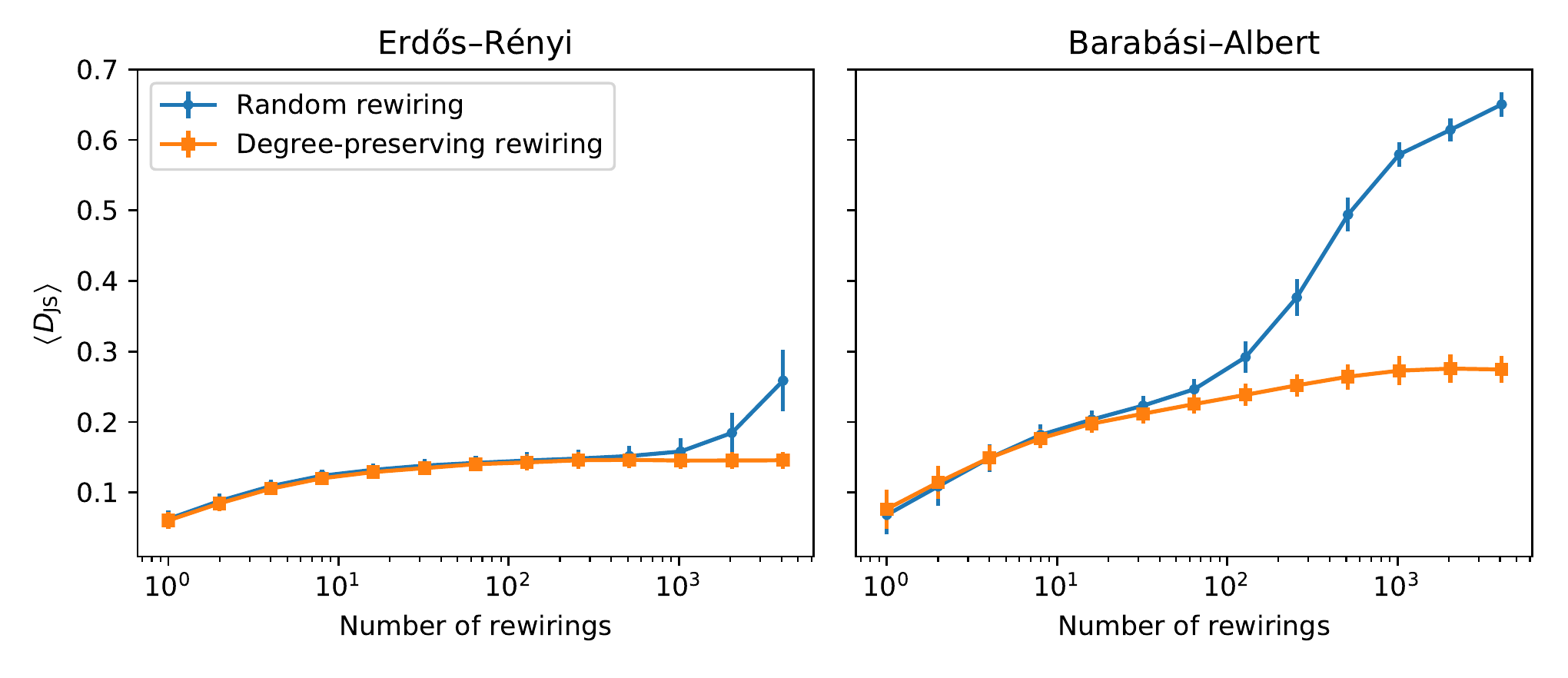}
\caption{Expected Network Portrait Divergence between an unmodified graph and a copy perturbed by edge rewiring. 
(left)  \ER{} graph ($N = 300, p = 3/299$);
(right) \BA{} graph ($N = 300, m = 3$).
A random rewiring is the deletion of an edge chosen uniformly at random followed by the insertion of a new edge between two nodes chosen uniformly at random.
Degree-preserving rewiring chooses a pair of edges $(u,v)$ and $(x,y)$ and rewires them across nodes to $(u,x)$ and $(v,y)$ such that the degrees of the chosen nodes remain unchanged~\cite{li2017network}.
Errorbars denote $\pm$ 1 s.d.
\label{fig:rewiringsPortraitDivergenceERBA}
}
\end{figure*}

\subsection{Comparing real networks}

We now apply the Network Portrait Divergence to real world networks, to evaluate its performance when used for several common network comparison tasks. 
Specifically, we study two real-world multiplex networks, using $\Djs$ to compare across the layers of these networks.
We also apply $\Djs$ to a temporal network, measuring how the network changes over time.
This last network has associated edge weights, and we consider it as both an unweighted and a weighted network.

The datasets for the three real-world networks we study are as follows:
\begin{description}
\item[\emph{Arabidopsis} GPI network]
The Genetic and Protein Interaction (GPI) network of \emph{Arabidopsis Thaliana} taken from BioGRID 3.2.108~\cite{stark2006biogrid,de2015structural}. 
This network consists of 6,980 nodes representing proteins and 18,654 links spread across seven multiplex layers. 
These layers represent different interaction modes from direct interaction of protein and physical associations of proteins within complexes to suppressive and synthetic genetic interactions.
Full details of the interaction layers are described in \cite{de2015structural}.

\item[\celegans{} connectome]

The network taken from the nervous system of the nematode \celegans{}. 
This multiplex network consists of 279 nodes representing non-pharyngeal neurons and 5,863 links representing electrical junctions and chemical synapses.
Links are spread across three layers: 
an electric junction layer (17.6\% of links), 
a monadic chemical synapse layer (28.0\% of links), 
and a polyadic chemical synapse layer (54.4\% of links)~\cite{chen2006wiring,de2015muxviz}.
The first layer represents electrical coupling via gap junctions between neurons, while links in the other layers represent neurons coupled by neurotransmitters.
\celegans{} has many advantages as a model organism in general~\cite{brenner1974genetics,white1986structure}, and its neuronal wiring diagram is completely mapped experimentally~\cite{white1986structure,hall2007c}, making its connectome an ideal test network dataset. 

\item[Open source developer collaboration network]
This network represents the software developers working on open source projects hosted by IBM on GitHub (\url{https://github.com/ibm}).
This network is temporal, evolving over the years 2013-2017, allowing us to compare its development across time. 
Aggregating all activity, this network consists of 679 nodes and 3,628 links.
Each node represents a developer who has contributed to the source code of the project, as extracted from the git metadata logs~\cite{bird2009promises,kalliamvakou2014promises,klug2016understanding}. 
Links occur between developers who have edited at least one source code file in common, a simple measure of collaboration. To study this network as a \emph{weighted network}, we associate with each link $(i,j)$ an integer weight $w_{ij}$ equal to the number of source files edited in common by developers $i$ and $j$. 

\end{description}

For these data, the Network Portrait Divergence reveals several interesting facets of the multilayer structure of the Arabidopsis network (Fig.~\ref{fig:multiplex_networks}). First, both the direct interaction and physical association layers are the most different, both from each other and from the other layers, with the synthetic interactions layer being most different from the direct and physical layers.
The remaining five layers show interesting relationships, with suppressive and synthetic genetic interactions being the most distinct pair ($\Djs = 0.553$) among these five layers.
The additive and suppressive layers were also distinct ($\Djs = 0.365$).
De Domenico \emph{et al.}\ observed a similar distinction between the suppressive and synthetic layers~\cite{de2015structural}.

\begin{figure*}[t!]
\centering
{\includegraphics[width=\textwidth]{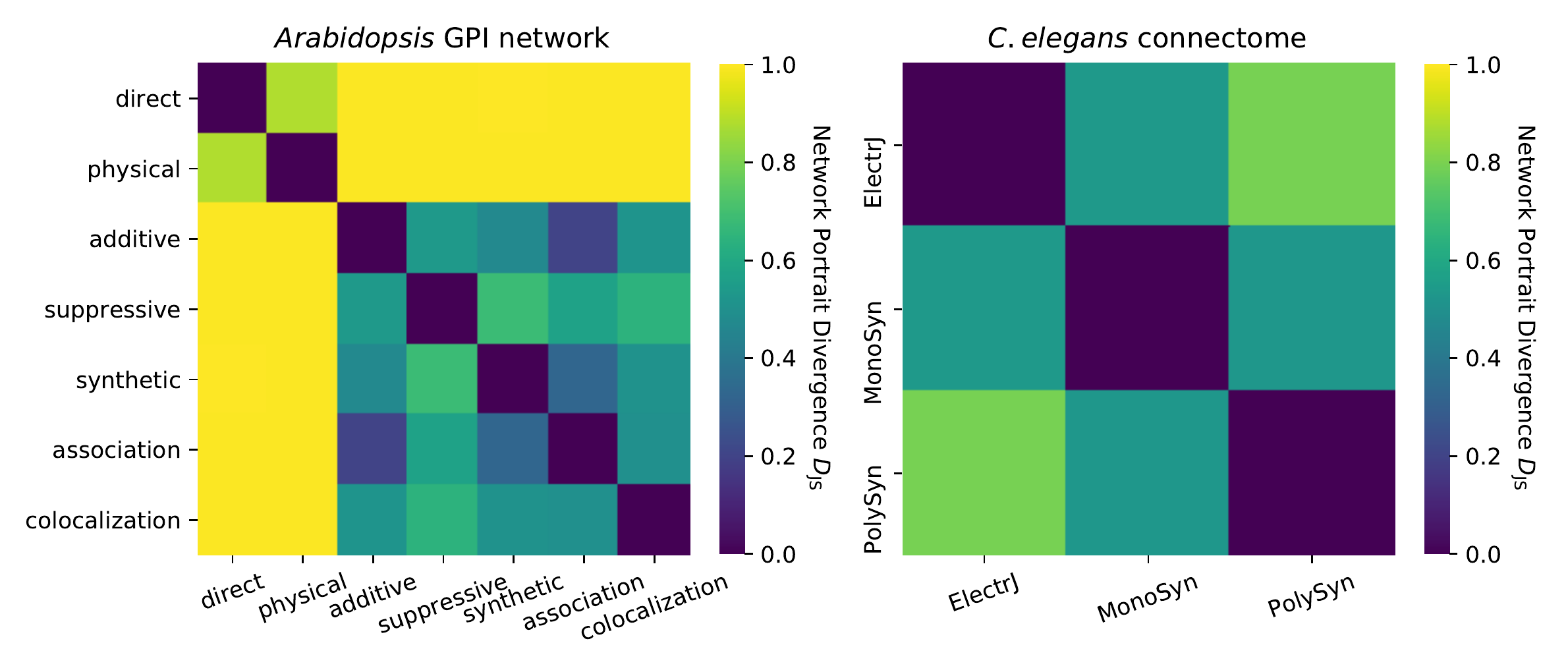}}
\caption{Using Network Portrait Divergence to compare across layers of multiplex networks.
The $(i,j)$ entry of each matrix illustrates the Network Portrait divergence between layer $i$ and layer $j$ of the corresponding multiplex network.
Network Portrait Divergence can reveal similarities and differences across the different multiplex layers.
\label{fig:multiplex_networks}
}
\end{figure*}

The multilayer \celegans{} network, consisting of only three layers, is easier to understand than \emph{Arabidopsis}. 
Here we find that the electrical junction layer is more closely related to the monadic synapse layer than it is to the polyadic synapse layer, while the polyadic layer is more closely related to the monadic synapse layer than to the electrical junction layer.
The \celegans{} data aggregated all polyadic synapses together into one layer accounting for over half of the total links in the network, but it would be especially interesting to determine what patterns for dyadic, triadic, etc.\ synapses can be revealed with the Network Portrait Divergence.

The third real-world network we investigate is a temporal network (Fig.~\ref{fig:ibmnetwork}).
This network encodes the collaboration activities between software developers who have contributed to open source projects owned by IBM on GitHub.com. Here each node represents a developer and a links exist between two developers when they have both edited at least one file in common among the source code hosted on GitHub.
This network is growing over time as more projects are open-sourced by IBM, and more developers join and contribute to those projects.
We draw the IBM developer network for each year from 2013 to 2017 in Fig.~\ref{fig:ibmnetwork}A, while Fig.~\ref{fig:ibmnetwork}B shows the change in size of these networks over time.
Lastly, Fig.~\ref{fig:ibmnetwork}C demonstrates how $\Djs$ captures patterns in the temporal evolution of the network, in particular revealing the structural similarity between 2015 and 2016, a period that showed a distinct slowdown in network growth compared with prior and subsequent years.

\begin{figure*}[t!]
\centering
{\includegraphics[width=\textwidth]{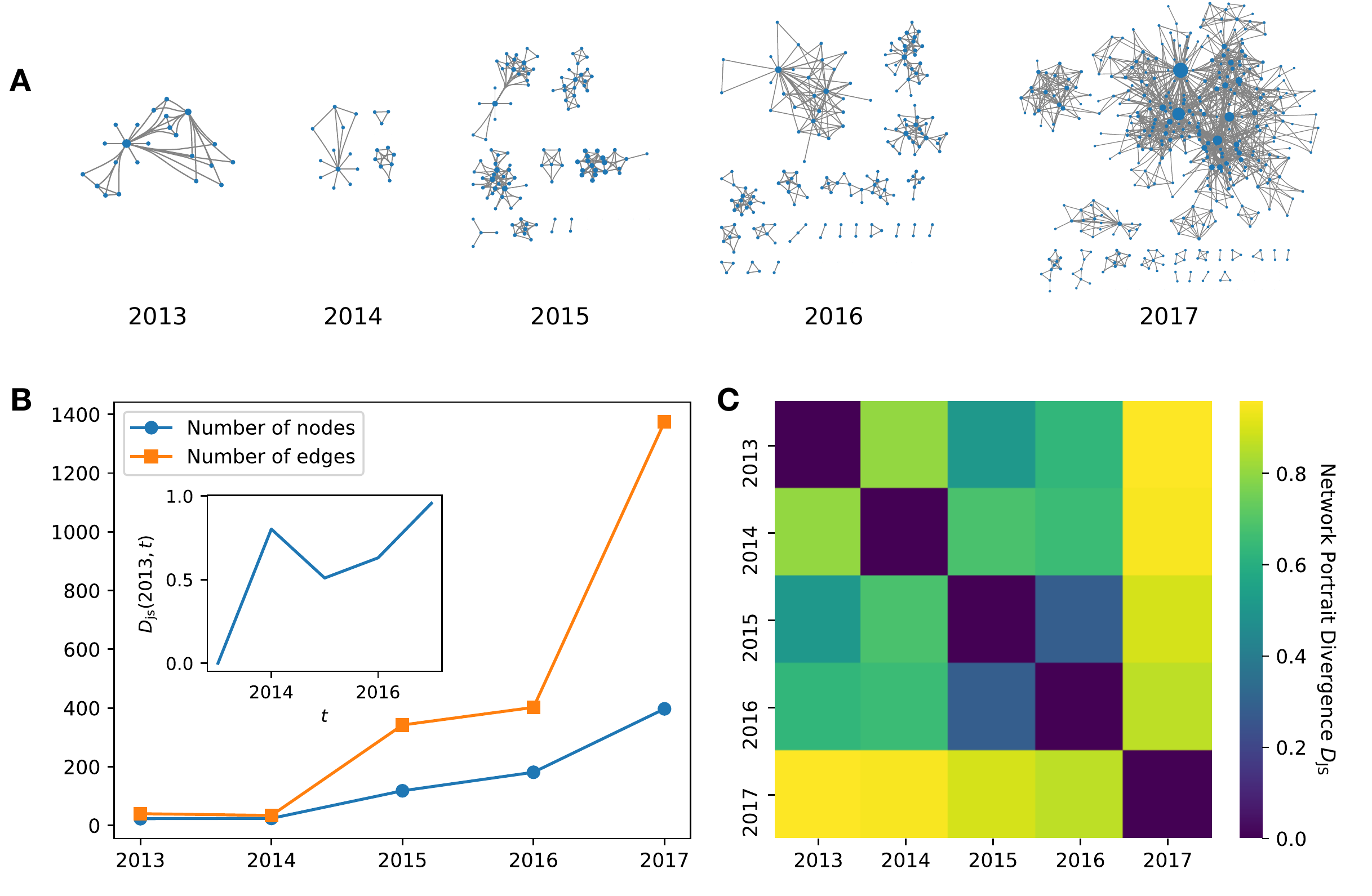}}
\caption{Network Portrait Divergence and the temporal evolution of the IBM GitHub collaboration network.
\lett{A} The IBM collaboration networks, each one aggregated from a year of developer activity on IBM's GitHub projects.
\lett{B} Growth of the IBM network over time.
The inset tracks the Network Portrait Divergence of the network away from the first year.
\lett{C} Network Portrait Divergences comparing networks across years. 
The decreased growth rate from 2015 to 2016 is captured by a very close divergence. 
Otherwise, the divergences are quite high, especially between the extreme years, demonstrating the dynamic changes the IBM collaboration underwent during the data window.
Panel C treats the network as unweighted. 
\label{fig:ibmnetwork}
}
\end{figure*}

Figure \ref{fig:weightedJSDgithub} shows the Network Portrait Divergence comparing different years of the IBM developer network, as per Fig.~\ref{fig:ibmnetwork} but now accounting for edge weights. 
Shortest path lengths were found using Dijkstra's algorithm based on reciprocal edge weights (see \SM{} for details).
The weighted portraits require binning the path length distributions (see \SM{}).
Here we show four such binnings, based on quantiles of the weighted path length distributions, from $b = 100$ bins (each bin accounts for 1\% of shortest paths) to $b = 10$ (each bin accounts for 10\% of shortest paths).
Note that each $\Djs$ is defined with its own portrait binning, as the quantiles of $\mathcal{L} = \mathcal{L}(G) \cup \mathcal{L}(\Gp)$ may vary across different pairs of networks, where
$\mathcal{L}(G) = \{\ell_{ij} \mid i,j \in V \land \ell_{ij} < \infty\}$ is the set of all unique shortest path lengths in $G$.
Overall, the relationships between the different time periods of the network do not depend strongly on the choice of binning, and we capture patterns across time similar to, though not identical to,  the patterns found analyzing the unweighted networks (shown in Fig.~\ref{fig:ibmnetwork}).

\begin{figure*}[t!]
\centering
{\includegraphics[width=0.975\textwidth]{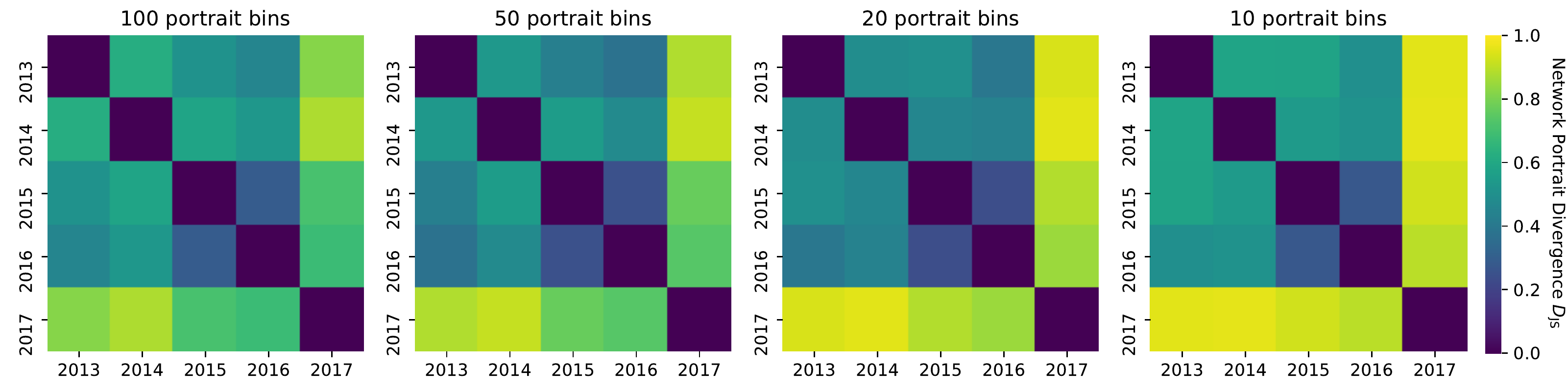}} 
\caption{Weighted Network Portrait Divergence for comparing the IBM developer network over time.
The weighted portraits depend on a choice of binning (see \SM{}); here we see that the
pattern of similarities between networks is robust to changes in binning.
\label{fig:weightedJSDgithub}
}
\end{figure*}

\section{Conclusion}
\label{sec:discussion}

In this paper we have introduced a measure, the Network Portrait Divergence, for comparing networks, and validated its performance on both synthetic and real-world network data. 
Network Portrait Divergence provides an information-theoretic interpretation that naturally encompasses all scales of structure within networks. 
It does not require the networks to be connected, nor does it make any assumptions as to how the two networks being compared are related, or indexed, or even that their node sets are equal.
Further, Network Portrait Divergence can naturally handle both undirected and directed, unweighted networks, and we have introduced a generalization for weighted networks.
The Network Portrait Divergence is based on a graph invariant, the network portrait. 
Comparison measures based on graph invariants are desirable as they will only be affected by the topology of the networks being studied, and not other externalities such as the format or order in which the networks are recorded or analyzed.
The computational complexity of the Network Portrait Divergence compares favorably to many other graph comparison measures, particularly spectral measures, but it remains a computation that is quadratic in the number of nodes of the graph. To scale to very large networks will likely require further efficiency gains, probably from approximation strategies to efficiently infer the shortest path distributions~\cite{potamias2009fast}.

Our approach bears superficial similarities with other methods.
Graph distances and shortest path length distributions are components common to many network comparison methods, including our own, although the Network Portrait Divergences utilizes a unique composition of all the shortest path length distributions for the networks being compared.
At the same time, other methods, including ours, use the Jensen-Shannon divergence to build comparison measures.
For example, the recent work of Chen et al.~\cite{chen2018complex} uses the Shannon entropy and Jensen-Shannon Divergence of a probability distribution computed by normalizing $e^{A}$, the exponential of the adjacency matrix also known as the communicability matrix.
This is an interesting approach, as are other approaches that examine powers of the adjacency matrix, but it suffers from a drawback: when comparing networks of different sizes, the underlying probability distributions must be modified in an ad hoc manner~\cite{chen2018complex}.
The Network Portrait Divergence, in contrast, does not need such modification.

The Network Portrait Divergence, and other methods, is based upon the Jensen-Shannon divergence between graph-invariant probability distributions, but many other information-theoretic tools exist for comparing distributions, including $f$-divergences such as the Hellinger distance or total variation distance, Bhattacharyya distance, and more. 
Using different measures for comparison may yield different interpretations and insights, and thus it is fruitful to better understand their use across different network comparison problems.

Network Portrait divergence lends itself well to developing statistical procedures when combined with suitable graph null models. 
For example, one could quantify the randomness of a structural property of a network by comparing the real network to a random model that controls for that property.
Further, to estimate the strength or significance of an observed divergence between graphs $G_1$ and $G_2$, one could generate a large number of random graph null proxies for either $G_1$ or $G_2$ (or both) and compare the divergences found between those nulls with the divergence between the original graphs. 
These comparisons could be performed using a $Z$-test or other statistical procedure, as appropriate. 
Exploring this and other avenues for Network Portrait Divergence-based statistical procedures is a fruitful direction for future work.

In general, because of the different potential interpretations and insights that researchers performing network comparison can focus on, the network comparison problem lacks quantitative benchmarks. 
These benchmarks are useful for comparing different approaches systematically.
However, the comparison problem is not as narrowly defined as, for example, graph partitioning, and thus effective methods may highlight very different facets of comparison.
While specific benchmarks can be introduced for specific facets, due to a lack of standardized, systematic benchmarks, most researchers introducing new comparison measures focus on a few tasks of interest, as we do here.
Bringing clarity to the literature by defining and validating an appropriate benchmarking suite would be a valuable contribution to the network comparison problem, but we consider this to be beyond the scope of our current work.
Instead, we have focused our method on highlighting several areas, particularly within real world applications but also using some intuitive synthetic scenarios, where the method is effective.

As network datasets increase in scope, network comparison becomes an increasingly common and important component of network data analysis. 
Measures such as the Network Portrait Divergence have the potential to help researchers better understand and explore the growing sets of networks within their possession.

\section*{Abbreviations}
JSD, Jensen-Shannon Divergence;
BA, Barab\'asi-Albert;
ER, Erd{\H o}s-R\'enyi;
KL, Kullback-Leibler;
GPI, Genetic and Protein Interaction.

\appendix

\section{Portraits and Network Portrait Divergences for weighted networks}
\label{sec:weighted}

The portrait matrix $B$ (Eq.~\eqref{eqn:bmatrixdefinition}) is most naturally defined for unweighted networks since the path lengths for unweighted networks count the number of edges traversed along the path to get from one node to another.
Since the number of edges is always integer-valued, these lengths can be used to define the rows of $B$. 
For weighted networks, on the other hand, path lengths are generally computed by summing edge weights along a path and will generally be continuous rather than integer-valued.

To generalize the portrait to weighted networks requires (i) using an algorithm for finding shortest paths accounting for edge weights (here we will use Dijkstra's algorithm~\cite{dijkstra1959note}), and (ii)
defining an appropriate aggregation strategy to group shortest paths by length to form the rows of $B$.
The algorithm for finding shortest paths defines the complexity of computing the portrait:
The single-source Dijkstra's algorithm with a Fibonacci heap runs in $\mathcal{O}(M + N \log N)$ time~\cite{fredman1987fibonacci} for a graph of $|V| = N$ nodes and $|E| = M$ edges.
This is more costly than the single-source Breadth-First Search algorithm we use for unweighted graphs, which runs in $\mathcal{O}(M + N)$ time. 
Computing $B$ requires all pairs of shortest paths, therefore the total complexity for computing a weighted portrait is $\mathcal{O}(M N + N^{2} \log N)$.
This again is more costly than the total complexity for the unweighted portrait, $\mathcal{O}(M N + N^{2})$ , but this is unavoidable as finding minimum-cost paths is generically more computationally intensive than finding minimum-length paths.

The simplest choice for aggregating shortest paths by length is to introduce a binning strategy for the continuous path lengths. 
Let $d_0=0 < d_1 < \cdots < d_{b+1}=L_\mathrm{max}$ define a set of $b$ intervals or bins, where $L_\mathrm{max}$ is the length of the longest shortest path.
Then the weighted portrait $B$ can be defined such that $B_{i,k} \equiv$ the number of nodes with $k$ nodes at distances $d_i \leq \ell < d_{i+1}$.
That is, the $i$-th row of the weighted portrait accounts for all shortest paths with lengths falling inside the $i$-th bin $[d_i, d_{i+1})$. (We also take the last bin to be inclusive on both sides, $[d_b, L_\mathrm{max}]$).

To compute $B$ using a binning requires determining the $b+1$ bin edges.
Here we consider a simple, adaptive binning based on quantiles of the shortest path distribution, but a researcher is free to adopt a different binning strategy as needed. 
Let $\mathcal{L}(G) = \{\ell_{ij} \mid i,j \in V \land \ell_{ij} < \infty\}$ be the set of all unique shortest path lengths between connected pairs of nodes in graph $G$.
We then define our binning to be the $b$ contiguous intervals that partition $\mathcal{L}$ into subsets of (approximately) equal size. Taking $b=100$, for example, ensures that each bin contains approximately 1\% of the shortest path lengths.
The number of bins $b$ can be chosen by the researcher to suit her needs, or automatically using any of a number of histogram binning rules such as Freedman-Diaconis~\cite{freedman1981histogram} or Sturges' Rule~\cite{sturges1926choice}.

Figure \ref{fig:weightedPortraits} shows the portrait for a weighted network, in this case taken from the IBM developer collaboration network.
Edge ($i,j$) in this network has associated non-negative edge weight $w_{ij} =$ the number of files edited in common by developers $i$ and $j$. 
The network is the union of the networks shown in Fig.~\ref{fig:ibmnetwork}A; we draw the giant connected component of this network in Fig.~\ref{fig:weightedPortraits}A. %
For this network, we consider shortest paths found using Dijkstra's algorithm with reciprocal edge weights, i.e., the ``length'' of a path $(i =i_0, i_1, i_2, \ldots, i_{n+1} = j)$ is $\ell_{ij} = \sum_{t=0}^{n} w_{i_t,i_{t+1}}^{-1},$ as larger edge weights define more closely related developers. However, this choice is not necessary in general.
The cumulative distribution of shortest path lengths, which we computed on all components of the network, is shown in Fig.~\ref{fig:weightedPortraits}B. 
Lastly, Fig.~\ref{fig:weightedPortraits}C shows the portrait $B$ for this network. 
For illustration, we draw the vertical positions of the rows in this matrix using the bin edges. These bin edges are highlighted on the cumulative distribution shown in Fig.~\ref{fig:weightedPortraits}B.

\begin{figure*}[t!]
\centering
\includegraphics[width=0.75\textwidth]{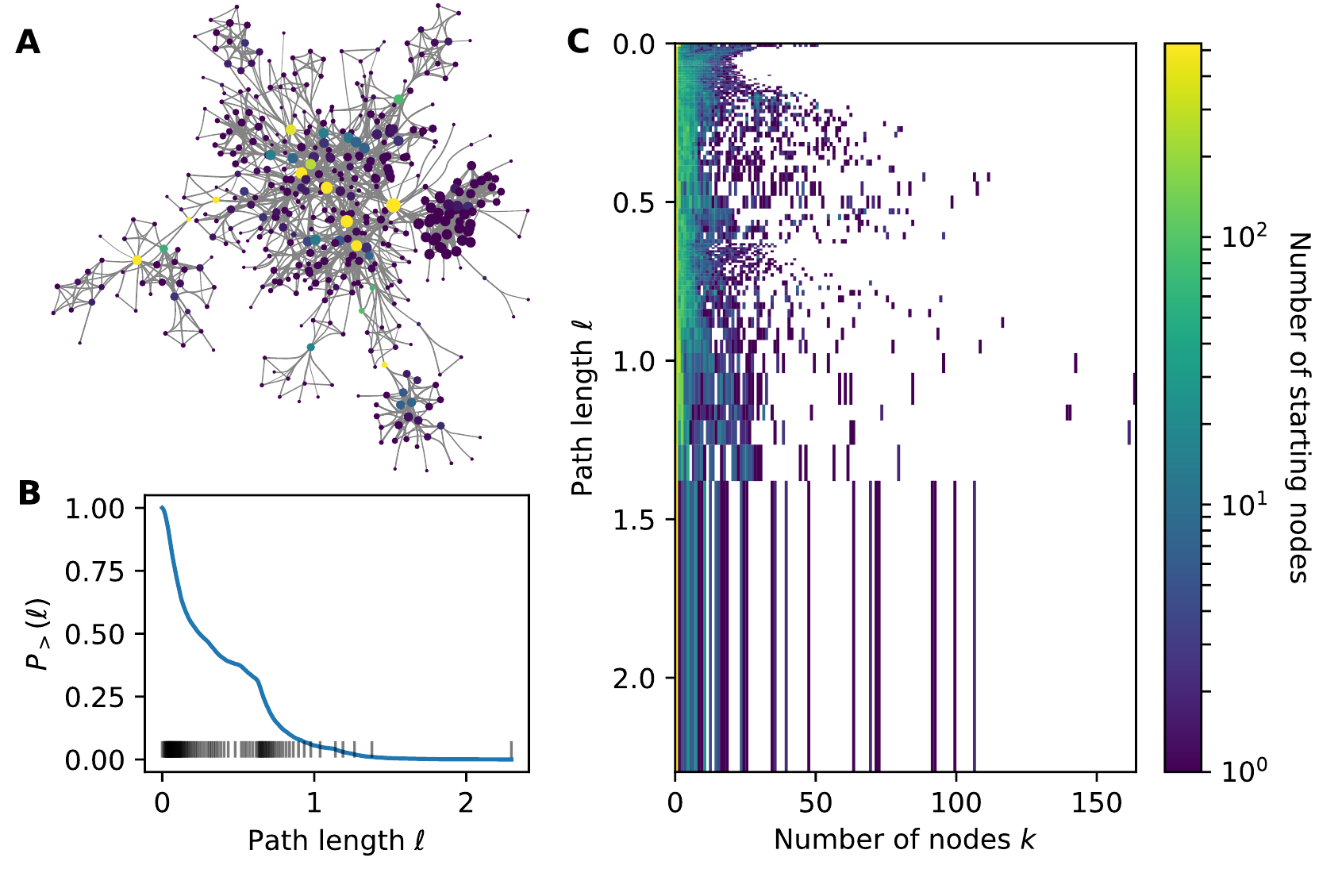}
\caption{A weighted network and its portrait.
\lett{A} The network of developers contributing to IBM projects on GitHub. 
This is the giant connected component of the union of all graphs shown in Fig.~\ref{fig:ibmnetwork}A.
The weight on edge $(i,j)$ represents the number of source files edited in common by developers $i$ and $j$.
Node size is proportional to degree; node color is proportional to betweenness centrality.
\lett{B} The cumulative distribution of shortest path lengths $\ell$ computed using Dijkstra's algorithm with reciprocal edge weights.
\lett{C} The weighted network portrait. The vertical marks in panel B denote the path length binning used in C.
\label{fig:weightedPortraits}
}
\end{figure*}

With a new definition for $B$ now in place for weighted networks, the Network Portrait Divergence can be computed exactly as before (Definition~\ref{def:portraitdiv}).
However, to compare portraits for two graphs $G$ and $\Gp$, it is important for the path length binning to be the same for both. We do this here by computing $b$ bins as quantiles of $\mathcal{L} = \mathcal{L}(G) \cup \mathcal{L}(\Gp)$ and then compute $B(G)$ and $B(\Gp)$ as before.
This ensures the rows of $B$ and $B^{\prime}$ are compatible in the distributions used within Definition~\ref{def:portraitdiv}.

\section*{Acknowledgments}
We thank Laurent H\'ebert-Dufresne for helpful comments.
J.P.B.\  was supported by the National Science Foundation under Grant No.\ IIS-1447634.  
E.M.B.\ would like to thank the Army Research Office (N68164-EG) and the Office of Naval Research (N00014-15-1-2093), and also DARPA.

\begin{singlespacing}

\end{singlespacing}

\end{document}